\documentclass[11pt]{article}

\usepackage[a4paper,margin=1in]{geometry}
\usepackage{amsmath}
\usepackage{amssymb}
\usepackage{amsthm}
\usepackage{hyperref}
\usepackage{authblk}

\newtheorem{lemma}{Lemma}[section]

\newtheorem{corollary}[lemma]{Corollary}
\newtheorem{proposition}[lemma]{Proposition}

\title{
Stabilization Without Simplification:\\
A Two-Dimensional Model of Software Evolution
}

\author[1]{Masaru Furukawa}
\affil[1]{Professor Emeritus, University of Toyama, Japan}

\date{}

\begin{document}

\maketitle

% =========================
\begin{abstract}
Software systems are widely observed to grow in size, complexity, and interdependence over time, yet many large-scale systems remain stable despite persistent structural burden. 
This apparent tension suggests a limitation in one-dimensional views of software evolution.

This paper introduces a graph-based, discrete-time probabilistic framework that separates structural burden from uncertainty. 
Change effort is modeled as a stochastic variable determined by the dependency neighborhood of the changed entity and by residual variability. 
Within this framework, burden is defined as expected effort and uncertainty as variance of effort.

We show that, on a time interval in which structural exposure does not decrease while structural dispersion, residual dispersion, or their covariance is reduced, a software system can exhibit a local stabilization-without-simplification regime: expected change burden remains non-decreasing, while uncertainty decreases. 
The result characterizes stabilization without simplification as a distributional regime rather than as a global monotonic property of software evolution.

The proposed framework provides a minimal theoretical explanation for how software systems can become more predictable over time without necessarily becoming structurally simpler, and offers a foundation for further theoretical and empirical studies of software evolution.
\end{abstract}

% =========================
\section{Introduction}

Software systems are widely observed to grow in size, complexity, and interdependence over time. 
This phenomenon has been extensively documented in the literature on software evolution~\cite{lehman1997,mens2008}, where increasing complexity is often associated with higher maintenance cost and reduced comprehensibility.

At the same time, large-scale software systems rarely collapse under their own complexity. 
Instead, many mature systems continue to evolve in a stable and manageable manner despite substantial structural burden. 
This apparent tension raises a fundamental question: how can software systems become more stable while remaining structurally complex?

A common implicit assumption in software engineering is that stabilization is achieved through simplification, such as reducing dependencies, improving modularity, or eliminating unnecessary complexity. 
Although such mechanisms are clearly important, they do not fully explain why many systems remain viable even when their structural load persists or increases over time.

This suggests a limitation in existing one-dimensional views of software evolution. 
Most approaches measure evolution primarily in terms of size, complexity, or defect-related indicators, and therefore tend to treat increasing complexity as inherently detrimental. 
However, such views do not distinguish between two different aspects of change: the expected cost of modifying a system and the variability of that cost.

In this paper, we propose a two-dimensional theoretical framework that separates these aspects into \emph{structural burden} and \emph{uncertainty}. 
Structural burden represents the expected cost of change, while uncertainty represents the variability or unpredictability of change effort. 
By distinguishing these dimensions, software evolution can be modeled as a trajectory in a two-dimensional state space.

The central claim of this paper is as follows:

\begin{quote}
Software evolution can exhibit a systematic decoupling between structural burden and uncertainty, in which uncertainty decreases while structural burden does not necessarily decrease.
\end{quote}

To formalize this claim, we model software evolution in discrete time and represent system structure by a dependency graph. 
Change effort is modeled as a random variable whose value depends on the structural neighborhood of the changed entity and on residual stochastic effects. 
Within this framework, burden is defined as expected effort and uncertainty as variance of effort.

The main contribution of this paper is theoretical. 
We show that, on a time interval over which explicit structural and stochastic assumptions hold, there is a local regime in which uncertainty decreases while burden remains constant or increases. 
This result establishes that \emph{stabilization without simplification} is not merely an empirical observation, but a structurally admissible consequence of the model.

The contributions of this paper are threefold:

\begin{itemize}
\item \textbf{Conceptual contribution:} We distinguish structural burden from uncertainty as two separate dimensions of software evolution.
\item \textbf{Formal contribution:} We introduce a discrete-time graph-based probabilistic framework in which burden and uncertainty are defined and analyzed rigorously.
\item \textbf{Theoretical contribution:} We prove that stabilization without simplification arises under explicit assumptions on structural load, structural regularization, process stabilization, and covariance control.

\end{itemize}

This paper does not attempt to establish broad empirical generality or to identify all causal mechanisms of software evolution. 
Rather, it provides a minimal theoretical foundation for understanding a recurring phenomenon observed in evolving software systems.

The remainder of this paper is organized as follows. 
Section~2 introduces the conceptual background. 
Section~3 presents the formal setup. 
Section~4 defines the core quantities of the model. 
Section~5 derives structural properties. 
Section~6 interprets the resulting dynamics. 
Section~7 relates the framework to empirical observations. 
Section~8 concludes with implications and limitations.
\section{Conceptual Background}

\subsection{Software Evolution and Structural Growth}

Software systems are widely observed to grow in size, complexity, and interdependence over time~\cite{lehman1997,mens2008}. 
As systems evolve, they accumulate dependencies, interfaces, coordination requirements, and other structural constraints that increase the cost of modification.

This structural growth is a central theme in software evolution research. 
However, it does not by itself explain why many mature systems remain viable and manageable despite continued complexity growth.

\subsection{Stability Beyond Simplification}

A common view in software engineering is that stabilization is achieved through simplification, for example by reducing coupling, improving modularity, or removing unnecessary complexity. 
These mechanisms are clearly important, but they do not exhaust the possible ways in which a system may become stable.

In practice, systems may remain structurally demanding while becoming easier to change in a predictable manner. 
This suggests that stability should not be understood solely as a reduction in structural load, but also as a reduction in the unpredictability of change outcomes.

\subsection{Magnitude and Variability of Change}

To understand this distinction, it is useful to separate two aspects of change effort.

First, changes have an expected magnitude: some systems require more effort on average because their structural dependencies make modification costly. 
Second, changes have variability: even if the average cost is high, the actual cost of individual changes may become more or less predictable over time.

Most traditional one-dimensional views of software evolution do not distinguish these two aspects explicitly. 
Instead, they compress them into a single notion of complexity or maintenance difficulty.

\subsection{Toward a Two-Dimensional View}

The present work adopts a two-dimensional perspective in which software evolution is described by:

\begin{itemize}
\item \textbf{Structural burden:} the expected cost of change
\item \textbf{Uncertainty:} the variability or unpredictability of change effort
\end{itemize}

These quantities capture different properties of evolving software systems. 
A system may exhibit high structural burden but low uncertainty if its dependencies are extensive yet well understood. 
Conversely, a system may exhibit low structural burden but high uncertainty if changes are small on average but difficult to predict.

This distinction makes it possible to describe software evolution as a trajectory in a two-dimensional state space rather than as movement along a single axis of increasing or decreasing complexity.

\subsection{Why a Formal Model is Needed}

The conceptual distinction between burden and uncertainty is intuitive, but intuition alone is not sufficient. 
To make this distinction analytically useful, a formal framework is needed.

In particular, a formal model should answer the following question:

\begin{quote}
Under what structural and stochastic conditions can uncertainty decrease while structural burden does not?
\end{quote}

This question cannot be answered adequately within a purely one-dimensional framework. 
It requires an explicit representation of both the structural sources of effort and the stochastic sources of variability.

\subsection{Conceptual Gap Addressed in This Paper}

The main conceptual gap addressed here is the absence of a formal separation between the expected cost of change and the variability of that cost.

Without such a separation, it is difficult to explain how systems can remain stable despite persistent or increasing structural complexity. 
The framework developed in this paper addresses this gap by introducing a graph-based, discrete-time probabilistic model in which these two dimensions are explicitly defined and analyzed.

This provides the foundation for the main theoretical result of the paper: stabilization without simplification is a structurally admissible regime of software evolution.
\section{Formal Setup}

\subsection{Discrete-Time Evolution}

We consider software evolution over discrete time steps
\[
t = 0,1,2,\dots .
\]
At each time step, the software system is represented by a structural state $S_t$.

To make the notion of structure explicit, we model $S_t$ as a directed dependency graph:

\begin{equation}
G_t = (V_t, E_t)
\end{equation}

where $V_t$ is the set of software entities (e.g., files, modules, or components) and $E_t \subseteq V_t \times V_t$ is the set of dependency relations.

\subsection{Change Events}

At time $t$, a change event selects a target entity
\[
X_t \in V_t
\]
and induces a modification effort $e_t$.

The target entity is treated as a random variable drawn from a probability distribution
\[
p_t(v) = \Pr(X_t = v), \qquad v \in V_t .
\]

Thus, software evolution is modeled as a sequence of random change events occurring on a time-varying dependency graph.

\subsection{Local Structural Load}

For a node $v \in V_t$, let

\begin{equation}
N_t^{+}(v) = \{ u \in V_t \mid (v,u) \in E_t \}
\end{equation}

denote its outgoing dependency neighborhood, and define its out-degree by

\begin{equation}
d_t(v) = |N_t^{+}(v)|.
\end{equation}

We interpret $d_t(v)$ as a local measure of structural exposure: the larger the outgoing dependency neighborhood, the greater the local dependency load associated with changing $v$. This is a deliberately minimal first-order measure. Depending on how dependency direction is encoded, propagation may alternatively be represented by incoming dependencies, reverse reachability, or transitive dependency measures; these are treated as extensions rather than as part of the minimal model.

\subsection{Effort Model}

We assume that the effort of a change event at time $t$ is given by

\begin{equation}
e_t = \alpha\, d_t(X_t) + \beta + \epsilon_t
\end{equation}

where $\alpha > 0$ and $\beta \geq 0$ are constants, and $\epsilon_t$ is a stochastic term satisfying only the unconditional mean-zero condition

\begin{equation}
\mathbb{E}[\epsilon_t] = 0.
\end{equation}

We do not assume conditional mean independence of $\epsilon_t$ from $X_t$. Thus, the residual component may covary with the structural exposure $d_t(X_t)$, and this covariance is captured explicitly below by $c_t = \mathrm{Cov}(d_t(X_t), \epsilon_t)$. The linear effort model is an idealized moment model; non-negativity of realized effort can be imposed by support restrictions on $\epsilon_t$ or by interpreting $e_t$ as a local linear approximation to a non-negative effort scale.

Under this model, effort consists of two components:

\begin{itemize}
\item a structural component determined by the dependency load of the changed entity, and
\item a stochastic component representing residual variability not directly explained by the graph structure.
\end{itemize}

\subsection{Burden and Uncertainty}

We define two aggregate quantities at time $t$:

\begin{equation}
B_t = \mathbb{E}[e_t]
\end{equation}

and

\begin{equation}
U_t = \mathrm{Var}(e_t).
\end{equation}

Here, the expectation and variance are taken over both the random target selection $X_t$ and the stochastic fluctuation $\epsilon_t$.

The quantity $B_t$ represents structural burden, while $U_t$ represents uncertainty.

\subsection{Difference Operators}

Because time is discrete, dynamic change is represented by first differences rather than derivatives:

\begin{equation}
\Delta B_t = B_{t+1} - B_t,
\qquad
\Delta U_t = U_{t+1} - U_t.
\end{equation}

In this framework:

\begin{itemize}
\item stabilization corresponds to $\Delta U_t < 0$,
\item non-simplification corresponds to $\Delta B_t \geq 0$.
\end{itemize}

\subsection{Derived Structural Quantities}

Let

\begin{equation}
\mu_t = \mathbb{E}[d_t(X_t)]
\end{equation}

denote the expected structural load of changed entities, and let

\begin{equation}
\sigma_{d,t}^2 = \mathrm{Var}(d_t(X_t))
\end{equation}

denote the variance of structural load across changed entities.

We also define the residual variance

\begin{equation}
\sigma_{\epsilon,t}^2 = \mathrm{Var}(\epsilon_t)
\end{equation}

and the covariance term

\begin{equation}
c_t = \mathrm{Cov}(d_t(X_t), \epsilon_t).
\end{equation}

These quantities allow burden and uncertainty to be expressed in terms of structural and stochastic components.

\subsection{Assumptions}

To derive nontrivial results, we impose the following assumptions.

\begin{itemize}
\item \textbf{A1 (Non-decreasing average structural load).} The expected structural load of changed entities does not decrease:
\begin{equation}
\mu_{t+1} \geq \mu_t.
\end{equation}

\item \textbf{A2 (Structural regularization).} The variance of structural load does not increase:
\begin{equation}
\sigma_{d,t+1}^2 \leq \sigma_{d,t}^2.
\end{equation}

\item \textbf{A3 (Process stabilization).} The residual variance does not increase:
\begin{equation}
\sigma_{\epsilon,t+1}^2 \leq \sigma_{\epsilon,t}^2.
\end{equation}

\item \textbf{A4 (Covariance control).} The covariance between structural load and residual effort does not increase:
\begin{equation}
c_{t+1} \leq c_t.
\end{equation}
\end{itemize}

These conditions are imposed over a specified time interval rather than over the entire lifetime of a system. Since $U_t = \mathrm{Var}(e_t) \geq 0$ and the covariance term is bounded by the Cauchy--Schwarz inequality,
\begin{equation}
|c_t| \leq \sigma_{d,t}\sigma_{\epsilon,t},
\end{equation}
the moment conditions on $\sigma_{d,t}^2$, $\sigma_{\epsilon,t}^2$, and $c_t$ are not independent. Assumption A4 is therefore imposed only within the admissible region defined by these variance and covariance constraints. Equivalently, one may write
\begin{equation}
\rho_t = \frac{c_t}{\sigma_{d,t}\sigma_{\epsilon,t}} \in [-1,1],
\end{equation}
and interpret covariance control as a non-increasing association between structural exposure and residual effort within this feasible region.
These assumptions do not imply that the system becomes simpler. 
Rather, they state that while average structural exposure may persist or grow, structural heterogeneity, residual fluctuation, and the dependence between structural load and residual effort may contract over time.

\subsection{Scope of the Setup}

The present setup is intentionally limited. 
It does not attempt to model all aspects of software systems, such as semantic structure, developer interaction, or organizational dynamics.

Instead, it provides a minimal graph-based probabilistic foundation for reasoning about how structural burden and uncertainty may evolve differently over time. 
The next section builds on this setup to define the core concepts of the Penalty of Change (POC) framework, where software evolution is represented through the joint movement of burden and uncertainty.
\section{Core Definitions}

Based on the formal setup introduced in the previous section, we now define the core quantities of the Penalty of Change (POC) framework.

\subsection{Structural Burden}

We define the structural burden at time $t$ as the expected effort of change:

\begin{equation}
B_t = \mathbb{E}[e_t].
\end{equation}

Under the effort model
\[
e_t = \alpha\, d_t(X_t) + \beta + \epsilon_t,
\]
and the unconditional mean-zero assumption
\[
\mathbb{E}[\epsilon_t] = 0,
\]
it follows that

\begin{equation}
B_t = \alpha \mu_t + \beta,
\end{equation}

where
\[
\mu_t = \mathbb{E}[d_t(X_t)].
\]

Thus, burden is determined by the average structural load of the entities selected for change.

\subsection{Uncertainty}

We define uncertainty at time $t$ as the variance of change effort:

\begin{equation}
U_t = \mathrm{Var}(e_t).
\end{equation}

Under the effort model, uncertainty can be written as

\begin{equation}
U_t = \alpha^2 \sigma_{d,t}^2 + \sigma_{\epsilon,t}^2 + 2\alpha c_t,
\end{equation}

where
\[
\sigma_{d,t}^2 = \mathrm{Var}(d_t(X_t)),
\qquad
\sigma_{\epsilon,t}^2 = \mathrm{Var}(\epsilon_t),
\qquad
c_t = \mathrm{Cov}(d_t(X_t), \epsilon_t).
\]

Accordingly, uncertainty reflects structural heterogeneity, residual process variability, and their covariance. The covariance term is not forced to be zero because the model does not assume conditional mean independence of residual effort from the selected change target.

As a genuine uncorrelated special case, if $c_t = 0$, then

\begin{equation}
U_t = \alpha^2 \sigma_{d,t}^2 + \sigma_{\epsilon,t}^2.
\end{equation}

\subsection{Decoupling}

We define decoupling between burden and uncertainty as the condition in which their temporal evolutions are not locked to the same direction.

Formally, decoupling is said to occur at time $t$ if

\begin{equation}
\Delta U_t < 0
\qquad \text{while} \qquad
\Delta B_t \geq 0.
\end{equation}

This captures the regime in which software systems become more predictable without becoming structurally simpler.

\subsection{Stabilization}

We define stabilization as a decrease in uncertainty over time:

\begin{equation}
\Delta U_t < 0.
\end{equation}

Under this definition, stabilization refers to increasing predictability in change effort rather than to reduced size or lower structural load.

\subsection{Non-Simplification}

We define non-simplification as the absence of a decrease in burden:

\begin{equation}
\Delta B_t \geq 0.
\end{equation}

This allows structural burden to remain constant or increase over time.

\subsection{Stabilization Without Simplification}

We define \emph{stabilization without simplification} as the regime in which both stabilization and non-simplification hold:

\begin{equation}
\Delta U_t < 0
\qquad \text{and} \qquad
\Delta B_t \geq 0.
\end{equation}

This is the central local regime analyzed in the remainder of the paper. The condition is understood over intervals where the relevant structural and stochastic assumptions hold; it is not asserted as a global monotonic law over the entire lifetime of a software system.

\subsection{POC State}

For each time step $t$, we define the state of the system in the POC framework by the ordered pair

\begin{equation}
\mathcal{P}_t = (B_t, U_t).
\end{equation}

Software evolution is then represented as a trajectory of states
\[
\mathcal{P}_0, \mathcal{P}_1, \mathcal{P}_2, \dots
\]
in the burden--uncertainty plane.

\subsection{Remarks}

These definitions are intentionally minimal. 
They rely only on the first- and second-order properties of change effort and do not depend on a specific programming language, architectural style, or development process.

The purpose of the framework is not to exhaustively model software evolution, but to isolate a structural distinction between burden and uncertainty that can support rigorous analysis.
\section{Structural Properties}

We now derive structural consequences of the graph-based effort model introduced in Section~3 and the core definitions in Section~4.

\subsection{Burden and Uncertainty as Derived Quantities}

We begin by making explicit how burden and uncertainty are determined by the structural and stochastic components of the model.

\begin{lemma}[Burden Formula]\label{lem:burden}
Under the effort model
\begin{equation}
e_t = \alpha\, d_t(X_t) + \beta + \epsilon_t,
\end{equation}
with $\mathbb{E}[\epsilon_t] = 0$, the structural burden satisfies
\begin{equation}
B_t = \alpha \mu_t + \beta,
\end{equation}
where $\mu_t = \mathbb{E}[d_t(X_t)]$.
\end{lemma}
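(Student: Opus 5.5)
The plan is to take expectations on both sides of the effort model and use linearity of expectation. First I check that every term has a finite expectation. The selection $X_t$ ranges over the finite vertex set $V_t$, so $d_t(X_t)$ takes finitely many values in $\{0,1,\dots,|V_t|-1\}$. Hence $\mu_t = \mathbb{E}[d_t(X_t)] = \sum_{v\in V_t} p_t(v)\, d_t(v)$ is a finite weighted sum. The residual $\epsilon_t$ is integrable, since the setup imposes $\mathbb{E}[\epsilon_t]=0$, which presupposes that this expectation exists. The constant $\beta$ is trivially integrable. So $e_t$ is integrable and $B_t = \mathbb{E}[e_t]$ is well defined.

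Next I apply linearity of expectation to $e_t = \alpha\, d_t(X_t) + \beta + \epsilon_t$:
\[
B_t = \mathbb{E}[e_t] = \alpha\,\mathbb{E}[d_t(X_t)] + \beta + \mathbb{E}[\epsilon_t] = \alpha\mu_t + \beta + 0 .
\]
Here the expectation is taken jointly over the target selection $X_t$ and the fluctuation $\epsilon_t$, as stipulated in the definition of $B_t$.

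One point needs stating explicitly. Linearity holds for arbitrary joint distributions, so no independence or conditional mean-zero assumption between $\epsilon_t$ and $X_t$ is needed. This matters because the model deliberately allows $c_t \neq 0$. The covariance affects only the second moment (the uncertainty formula), not the first.

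There is no real obstacle here. The only step that needs any care is the integrability check above, which makes the use of linearity legitimate.
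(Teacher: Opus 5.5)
Your proof is correct and follows the paper's own argument: substitute the effort model into $B_t=\mathbb{E}[e_t]$, apply linearity of expectation, and use $\mathbb{E}[\epsilon_t]=0$. Your integrability check and your remark that no independence between $X_t$ and $\epsilon_t$ is needed are sound additions the paper leaves implicit. One minor slip: $E_t \subseteq V_t\times V_t$ permits self-loops, so $d_t(v)$ can reach $|V_t|$ rather than $|V_t|-1$; also, finiteness of $V_t$ is implicit rather than stated in the setup. Neither point affects the argument.
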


\textit{Proof.}
By definition,
\[
B_t = \mathbb{E}[e_t].
\]
Substituting the effort model gives
\[
B_t = \mathbb{E}[\alpha d_t(X_t) + \beta + \epsilon_t]
     = \alpha \mathbb{E}[d_t(X_t)] + \beta + \mathbb{E}[\epsilon_t].
\]
Using $\mathbb{E}[\epsilon_t]=0$, we obtain
\[
B_t = \alpha \mu_t + \beta.
\]
\hfill $\square$

\begin{lemma}[General Uncertainty Formula]\label{lem:uncertainty}
Under the effort model,
\begin{equation}
U_t
=
\alpha^2 \sigma_{d,t}^2
+
\sigma_{\epsilon,t}^2
+
2\alpha\, c_t,
\end{equation}
where
\[
\sigma_{d,t}^2 = \mathrm{Var}(d_t(X_t)),
\qquad
\sigma_{\epsilon,t}^2 = \mathrm{Var}(\epsilon_t),
\qquad
c_t = \mathrm{Cov}(d_t(X_t), \epsilon_t).
\]
\end{lemma}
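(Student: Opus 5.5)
The plan is to compute $U_t=\mathrm{Var}(e_t)$ directly from the effort model $e_t=\alpha\,d_t(X_t)+\beta+\epsilon_t$, using only bilinearity of covariance and the fact that adding a constant does not change variance. Write $D_t=d_t(X_t)$ for brevity; since $V_t$ is finite, $D_t$ is a bounded random variable, so its moments exist. We take as given that $\epsilon_t$ has finite variance, which is implicit in the definition of $\sigma_{\epsilon,t}^2$. Then all variances and covariances below are finite, and the Cauchy--Schwarz bound $|c_t|\le\sigma_{d,t}\sigma_{\epsilon,t}$ noted earlier holds.

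First, the constant $\beta$ drops out: $\mathrm{Var}(e_t)=\mathrm{Var}(\alpha D_t+\epsilon_t)$. Next, expand
\[
\mathrm{Var}(\alpha D_t+\epsilon_t)=\mathrm{Cov}(\alpha D_t+\epsilon_t,\;\alpha D_t+\epsilon_t)
\]
by bilinearity and symmetry of covariance. This gives
\[
\alpha^2\mathrm{Var}(D_t)+2\alpha\,\mathrm{Cov}(D_t,\epsilon_t)+\mathrm{Var}(\epsilon_t).
\]
Substituting the definitions of $\sigma_{d,t}^2$, $c_t$ and $\sigma_{\epsilon,t}^2$ then yields the stated formula.

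Alternatively, one can work from $\mathrm{Var}(Y)=\mathbb{E}[(Y-\mathbb{E}Y)^2]$. By Lemma~\ref{lem:burden}, $e_t-B_t=\alpha(D_t-\mu_t)+\epsilon_t$, because $\mathbb{E}[\epsilon_t]=0$. Squaring and taking expectations gives the same three terms. The cross term is $2\alpha\,\mathbb{E}[(D_t-\mu_t)\epsilon_t]$, which equals $2\alpha c_t$, again because $\epsilon_t$ has mean zero.

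There is no real obstacle here. The only point needing care is to avoid assuming that $D_t$ and $\epsilon_t$ are independent, since the model explicitly allows them to be correlated. The cross term must therefore be kept as $2\alpha c_t$ rather than set to zero. As a consistency check, the uncorrelated special case $c_t=0$ should recover the earlier expression $U_t=\alpha^2\sigma_{d,t}^2+\sigma_{\epsilon,t}^2$, and it does.
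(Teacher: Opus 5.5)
Your proof is correct and matches the paper's argument: drop the constant $\beta$, expand $\mathrm{Var}(\alpha\, d_t(X_t)+\epsilon_t)$ by bilinearity, and keep the cross term $2\alpha c_t$, since independence is not assumed. Your remarks on finiteness of moments and your alternative computation via centered second moments are correct additions that the paper leaves implicit.
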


\textit{Proof.}
By definition,
\[
U_t = \mathrm{Var}(e_t).
\]
Substituting the effort model,
\[
U_t = \mathrm{Var}(\alpha d_t(X_t) + \beta + \epsilon_t).
\]
Since $\beta$ is constant,
\[
U_t
=
\alpha^2 \mathrm{Var}(d_t(X_t))
+
\mathrm{Var}(\epsilon_t)
+
2\alpha\,\mathrm{Cov}(d_t(X_t),\epsilon_t).
\]
Hence,
\[
U_t
=
\alpha^2 \sigma_{d,t}^2
+
\sigma_{\epsilon,t}^2
+
2\alpha\, c_t.
\]
\hfill $\square$

\begin{corollary}[Uncorrelated Special Case]
If $c_t = \mathrm{Cov}(d_t(X_t),\epsilon_t)=0$, then
\begin{equation}
U_t = \alpha^2 \sigma_{d,t}^2 + \sigma_{\epsilon,t}^2.
\end{equation}
\end{corollary}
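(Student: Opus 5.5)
The corollary follows by specializing Lemma~\ref{lem:uncertainty}, so the plan is a single substitution rather than a new computation. I will start from the general uncertainty formula
\[
U_t = \alpha^2 \sigma_{d,t}^2 + \sigma_{\epsilon,t}^2 + 2\alpha\, c_t,
\]
which holds under the effort model with no extra assumption on the dependence between $d_t(X_t)$ and $\epsilon_t$. Imposing the hypothesis $c_t = \mathrm{Cov}(d_t(X_t),\epsilon_t) = 0$ makes the cross term $2\alpha c_t$ equal to zero. This gives $U_t = \alpha^2 \sigma_{d,t}^2 + \sigma_{\epsilon,t}^2$ directly.

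The one point I would check is that every quantity in Lemma~\ref{lem:uncertainty} is well defined, in particular that the variances are finite. The lemma already presupposes this, and the setup implicitly assumes second moments exist, since $\sigma_{\epsilon,t}^2$ and $c_t$ are defined there. I would note this implicitly rather than belabor it.

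There is no genuine obstacle here. The only thing worth remarking is that the hypothesis is strictly weaker than independence of $d_t(X_t)$ and $\epsilon_t$, or than conditional mean independence. It only requires uncorrelatedness, which is exactly what the variance-of-a-sum identity needs. This is consistent with the paper's earlier emphasis that conditional mean independence is not assumed.
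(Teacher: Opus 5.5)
Your proposal is correct and is the same argument as the paper's, which simply notes that the corollary is immediate from the general uncertainty formula $U_t = \alpha^2\sigma_{d,t}^2 + \sigma_{\epsilon,t}^2 + 2\alpha c_t$ once $c_t = 0$ is substituted. Your side remarks on finite second moments and on uncorrelatedness being weaker than independence or conditional mean independence are accurate, though not needed for the argument.
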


\textit{Proof.}
Immediate from Lemma~\ref{lem:uncertainty}.
\hfill $\square$

These results show that burden depends only on the average structural load of changed entities, whereas uncertainty depends on three components: structural heterogeneity, residual stochastic fluctuation, and their covariance.

\subsection{Monotonicity Under Structural Regularization and Covariance Control}

We now characterize the local dynamic regime implied by the assumptions introduced in Section~3.

\begin{proposition}[Characterization of a Local SWS Regime]\label{prop:local-sws}
Consider a time interval $[t_0,t_1]$ over which assumptions A1--A4 hold. That is, for each $t \in [t_0,t_1)$,

\begin{itemize}
\item[\textbf{A1}] $\mu_{t+1} \geq \mu_t$
\item[\textbf{A2}] $\sigma_{d,t+1}^2 \leq \sigma_{d,t}^2$
\item[\textbf{A3}] $\sigma_{\epsilon,t+1}^2 \leq \sigma_{\epsilon,t}^2$
\item[\textbf{A4}] $c_{t+1} \leq c_t$, where $c_t = \mathrm{Cov}(d_t(X_t), \epsilon_t)$, within the admissible covariance region.
\end{itemize}

Then, for each $t \in [t_0,t_1)$,
\begin{equation}
\Delta B_t \geq 0
\end{equation}
and
\begin{equation}
\Delta U_t \leq 0.
\end{equation}

If A1 is strict for a transition, then burden strictly increases over that transition. If at least one of A2--A4 is strict for a transition, then uncertainty strictly decreases over that transition. Hence, when burden is non-decreasing while uncertainty is decreasing on the interval, the system exhibits a local stabilization-without-simplification regime.
\end{proposition}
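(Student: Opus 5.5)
The plan is a direct computation of first differences from the closed forms of Lemma~\ref{lem:burden} and Lemma~\ref{lem:uncertainty}, followed by a sign analysis term by term. Fix a transition $t \in [t_0,t_1)$. By Lemma~\ref{lem:burden}, applied at times $t$ and $t+1$ with the same constants $\alpha,\beta$,
\[
\Delta B_t = (\alpha\mu_{t+1}+\beta) - (\alpha\mu_t+\beta) = \alpha(\mu_{t+1}-\mu_t).
\]
Since $\alpha>0$, A1 gives $\Delta B_t \geq 0$, and strict inequality in A1 gives $\Delta B_t>0$.

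For uncertainty, Lemma~\ref{lem:uncertainty} applied at $t$ and $t+1$ decomposes the difference into three separately signed pieces:
\[
\Delta U_t = \alpha^2\bigl(\sigma_{d,t+1}^2-\sigma_{d,t}^2\bigr) + \bigl(\sigma_{\epsilon,t+1}^2-\sigma_{\epsilon,t}^2\bigr) + 2\alpha\bigl(c_{t+1}-c_t\bigr).
\]
The coefficients $\alpha^2$, $1$ and $2\alpha$ are all strictly positive because $\alpha>0$. A2, A3 and A4 then make each bracket nonpositive, so $\Delta U_t\leq 0$. If any one of A2--A4 is strict at this transition, that term is strictly negative while the remaining two are nonpositive. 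Hence $\Delta U_t<0$.

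The final sentence of the proposition then follows by combining the two conclusions with the definition of stabilization without simplification from Section~4. On any transition where $\Delta B_t\geq 0$ and $\Delta U_t<0$, the defining condition holds. This is guaranteed, for example, whenever A1 holds and at least one of A2--A4 is strict.

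There is no substantive obstacle. The only point needing care is the role of the ``admissible covariance region'' qualifier in A4. I would note that the argument uses only the stated inequalities, so Cauchy--Schwarz feasibility ($|c_t|\leq\sigma_{d,t}\sigma_{\epsilon,t}$, $U_t\geq 0$) is needed merely to ensure that the hypotheses are jointly realizable, not for the derivation itself. I would also state explicitly that $\alpha$ and $\beta$ are time-invariant, since this is what makes $\beta$ cancel in $\Delta B_t$ and lets the coefficient signs carry over in $\Delta U_t$.
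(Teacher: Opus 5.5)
Your proposal is correct and follows the paper's proof essentially line for line: take first differences of the closed forms from Lemma~\ref{lem:burden} and Lemma~\ref{lem:uncertainty}, then sign each term using $\alpha>0$ and A1--A4. You also make explicit the strict-A1 case and the time-invariance of $\alpha,\beta$, which the paper's proof leaves implicit, but the argument is the same.
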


\textit{Proof.}
From Lemma~\ref{lem:burden},
\[
B_t = \alpha \mu_t + \beta.
\]
Therefore,
\[
\Delta B_t = B_{t+1} - B_t = \alpha(\mu_{t+1} - \mu_t).
\]
Since $\alpha > 0$ and A1 states that $\mu_{t+1} \geq \mu_t$, it follows that
\[
\Delta B_t \geq 0.
\]

From Lemma~\ref{lem:uncertainty},
\[
U_t = \alpha^2 \sigma_{d,t}^2 + \sigma_{\epsilon,t}^2 + 2\alpha c_t.
\]
Hence,
\[
\Delta U_t
=
U_{t+1} - U_t
=
\alpha^2(\sigma_{d,t+1}^2 - \sigma_{d,t}^2)
+
(\sigma_{\epsilon,t+1}^2 - \sigma_{\epsilon,t}^2)
+
2\alpha(c_{t+1} - c_t).
\]
By A2, A3, and A4, each term on the right-hand side is non-positive, so
\[
\Delta U_t \leq 0.
\]
If at least one of these inequalities is strict, then $\Delta U_t < 0$.
\hfill $\square$

\subsection{Corollaries}

The proposition immediately yields several useful consequences.

\begin{corollary}[Decoupling is Structurally Admissible]
On any interval over which assumptions A1--A4 hold, burden and uncertainty need not evolve in the same direction. In particular, if at least one of A2--A4 is strict, burden may remain constant or increase while uncertainty decreases.
\end{corollary}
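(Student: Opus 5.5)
The corollary is a direct consequence of Proposition~\ref{prop:local-sws}, supplemented by a short admissibility argument.

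First, fix a transition $t \in [t_0,t_1)$ on which A1--A4 hold and at least one of A2--A4 is strict. Lemma~\ref{lem:burden} gives $\Delta B_t = \alpha(\mu_{t+1}-\mu_t) \geq 0$, and Lemma~\ref{lem:uncertainty} gives
\[
\Delta U_t = \alpha^2(\sigma_{d,t+1}^2-\sigma_{d,t}^2) + (\sigma_{\epsilon,t+1}^2-\sigma_{\epsilon,t}^2) + 2\alpha(c_{t+1}-c_t).
\]
Each term is non-positive and at least one is strictly negative, so $\Delta U_t < 0$. Thus burden is constant or increasing while uncertainty strictly decreases. This is exactly the decoupling condition of Section~4, so the directions of $B$ and $U$ are not locked together.

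Second, the phrase ``need not evolve in the same direction'' is an existence claim, so I would also show that the hypotheses are jointly satisfiable. The plan is to exhibit an explicit two-step example on a small dependency graph. The natural choice is a star-like graph: at time $t$, $X_t$ is uniform on two nodes with out-degrees $1$ and $3$. At time $t+1$, $X_{t+1}$ is concentrated on nodes of out-degree $2$, or on nodes of out-degree $3$ to make A1 strict, but with smaller spread.

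For the residual term, I would take $\epsilon_t$ to be independent of $X_t$, with variance $1$ at time $t$ and variance $1/2$ at time $t+1$. Then $c_t = c_{t+1} = 0$, which sits trivially inside the Cauchy--Schwarz admissible region. Computing $\mu$ and $\sigma_d^2$ at both times then verifies A1--A4, with A2 and A3 strict. The same example also realizes the case $\Delta B_t > 0$, which covers both alternatives, ``constant'' and ``increase''.

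The main obstacle is that the construction must respect the feasibility constraint $|c_t|\le\sigma_{d,t}\sigma_{\epsilon,t}$ noted after A4. If one instead tried to realize decoupling solely through a strict decrease of $c_t$, shrinking $\sigma_d$ or $\sigma_\epsilon$ also tightens the bound on $c$. Choosing independent residuals ($c\equiv 0$) sidesteps this entirely, so I would use that route.
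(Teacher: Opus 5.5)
Your proposal is correct, and its core is the paper's argument: the paper's whole proof is ``Immediate from Proposition~\ref{prop:local-sws}'', and your first step just unpacks that via Lemmas~\ref{lem:burden} and~\ref{lem:uncertainty} to get $\Delta B_t \geq 0$ and $\Delta U_t < 0$. Your explicit witness also checks out and is a genuine addition the paper omits: degrees $1,3$ uniform at time $t$, then zero-spread degree $2$ or $3$, with independent residuals of variance $1$ then $1/2$, gives $\Delta U_t = -\alpha^2 - 1/2 < 0$ and $\Delta B_t \in \{0,\alpha\}$, so the strict hypotheses are jointly satisfiable and the ``need not'' claim is not vacuous.
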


\textit{Proof.}
Immediate from Proposition~\ref{prop:local-sws}.
\hfill $\square$

\begin{corollary}[Simplification is Not Necessary for Stabilization]
On any interval over which assumptions A1--A4 hold, stabilization can occur without any reduction in expected structural burden, provided at least one uncertainty-reducing condition is strict.
\end{corollary}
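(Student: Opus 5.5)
The corollary is essentially a restatement of Proposition~\ref{prop:local-sws}, so the plan is to deduce it directly from that proposition, transition by transition. Fix an interval $[t_0,t_1]$ on which A1--A4 hold, and a transition $t \in [t_0,t_1)$ at which at least one of A2, A3, A4 holds strictly. The claim to be verified is that on this transition $\Delta U_t < 0$ (stabilization) while $\Delta B_t \geq 0$ (no reduction in expected burden). In other words, the transition lies in the stabilization-without-simplification regime of Section~4.

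The first step handles burden. By Lemma~\ref{lem:burden}, $\Delta B_t = \alpha(\mu_{t+1}-\mu_t)$. Since $\alpha>0$ and A1 holds, this gives $\Delta B_t \geq 0$, so expected structural burden does not decrease.

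The second step handles uncertainty. By Lemma~\ref{lem:uncertainty},
\[
\Delta U_t = \alpha^2(\sigma_{d,t+1}^2-\sigma_{d,t}^2) + (\sigma_{\epsilon,t+1}^2-\sigma_{\epsilon,t}^2) + 2\alpha(c_{t+1}-c_t).
\]
Each of the three summands is non-positive by A2, A3 and A4 respectively, using $\alpha^2>0$ and $2\alpha>0$. By hypothesis at least one summand is strictly negative. A sum of non-positive terms with at least one strictly negative term is strictly negative, so $\Delta U_t<0$. This is exactly the strictness clause of Proposition~\ref{prop:local-sws}, which I would cite rather than re-derive.

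Combining the two steps, the transition satisfies $\Delta U_t<0$ and $\Delta B_t\geq 0$, which is the required conclusion. The word ``can'' in the statement is then justified in the weak sense that the hypotheses are mutually consistent, and I would note this briefly. For instance, take a fixed dependency graph and a fixed target distribution, so that $\mu$, $\sigma_d^2$ and $c=0$ are all constant. Then shrink only the residual variance $\sigma_\epsilon^2$. Here A1--A4 hold, and A3 holds strictly.

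The only point needing care is the consistency example: any strict decrease in $c_t$ or $\sigma_{\epsilon,t}^2$ must respect the Cauchy--Schwarz constraint $|c_t|\leq\sigma_{d,t}\sigma_{\epsilon,t}$. The uncorrelated choice $c_t\equiv 0$ avoids this issue entirely. Otherwise the argument is immediate, and I expect no substantive obstacle.
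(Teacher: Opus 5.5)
Your proposal is correct and matches the paper's proof, which simply invokes Proposition~\ref{prop:local-sws} to obtain $\Delta U_t<0$ together with $\Delta B_t\geq 0$ whenever one of A2--A4 is strict; these are exactly your two steps. Your added consistency example (fixed graph and target distribution, $c_t\equiv 0$, strictly shrinking $\sigma_{\epsilon,t}^2$) is valid and is a small bonus: it actually exhibits a case where the ``can'' is realized, which the paper leaves implicit.
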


\textit{Proof.}
By Proposition~\ref{prop:local-sws}, $\Delta U_t < 0$ may hold while $\Delta B_t \geq 0$.
Thus, stabilization does not require $\Delta B_t < 0$.
\hfill $\square$

\subsection{Interpretation}

Proposition~\ref{prop:local-sws} should be read as a characterization result rather than as a deep existence theorem. The direction of $B_t$ follows from its affine dependence on $\mu_t$, while the direction of $U_t$ follows from the variance decomposition into structural dispersion, residual dispersion, and covariance. The contribution of the result is therefore not that stabilization without simplification emerges from arbitrary conditions, but that the regime can be represented as a specific configuration of distributional moments:

\begin{itemize}
\item non-decreasing average structural exposure of changed entities,
\item non-increasing heterogeneity of structural exposure,
\item non-increasing residual process variability, and
\item non-increasing dependence between structural exposure and residual effort.
\end{itemize}

Thus, the burden--uncertainty decoupling is not merely a verbal definition. It is represented by interpretable moment conditions: non-decreasing expected structural exposure together with non-increasing structural dispersion, residual process variability, and admissible covariance. When at least one uncertainty component strictly decreases, this representation yields a local stabilization-without-simplification regime.
\section{Dynamic Interpretation}

The proposition established in the previous section characterizes stabilization without simplification as a local regime that follows from explicit structural and stochastic conditions. 
We now interpret the meaning of these conditions in software-evolution terms.

\subsection{From Static Definitions to Dynamic Regimes}

The quantities $B_t$ and $U_t$ define the state of the system at time $t$ in the burden--uncertainty plane. 
By itself, this state representation is static. 
Proposition~\ref{prop:local-sws} introduces dynamics by showing how the evolution of structural and stochastic components determines the direction of motion in this space over intervals where A1--A4 hold.

In particular, the regime
\[
\Delta B_t \geq 0,
\qquad
\Delta U_t < 0
\]
describes systems whose expected structural cost of change does not decrease, while the variability of change effort contracts over time.

\subsection{Interpretation of Assumption A1}

Assumption A1 states that the expected structural load of changed entities does not decrease:
\[
\mu_{t+1} \geq \mu_t.
\]

This means that the average dependency exposure of the locations being modified remains constant or increases. 
In software terms, the system does not become structurally simpler. 
Changes continue to involve entities embedded in nontrivial dependency neighborhoods.

This assumption reflects the common situation in which mature software systems continue to accumulate interfaces, coordination requirements, and architectural constraints.

\subsection{Interpretation of Assumption A2}

Assumption A2 states that the variance of structural load across changed entities does not increase:
\[
\sigma_{d,t+1}^2 \leq \sigma_{d,t}^2.
\]

This condition admits at least two interpretations.

First, the structure itself may become more regular, so that dependency exposure is less heterogeneous across entities. 
In this interpretation, the graph becomes easier to navigate because structurally irregular regions become less dominant.

Second, even if the graph remains heterogeneous, developers and processes may learn to avoid, isolate, or better manage the most irregular regions. 
In that case, the effective selection distribution over changed targets becomes more regular, even without a major simplification of the underlying graph.

In probabilistic terms, this second interpretation means that the selection distribution over changed entities becomes more concentrated on structurally better-behaved regions, even if the underlying graph itself remains heterogeneous.

Thus, A2 can reflect either structural regularization of the system itself or regularization in how the system is engaged through change.

\subsection{Interpretation of Assumption A3}

Assumption A3 states that the residual variance does not increase:
\[
\sigma_{\epsilon,t+1}^2 \leq \sigma_{\epsilon,t}^2.
\]

This residual term captures variability not directly explained by the local dependency structure. 
Its reduction can be interpreted as \emph{process stabilization}, arising from factors such as accumulated developer knowledge, repeated change patterns, testing infrastructure, code review discipline, and workflow standardization.

Thus, even when structural burden persists, change effort can become more predictable because the non-structural sources of variability are reduced.

\subsection{Interpretation of Assumption A4}

Assumption A4 states that the covariance term does not increase:
\[
c_{t+1} \leq c_t,
\qquad
c_t = \mathrm{Cov}(d_t(X_t), \epsilon_t).
\]

This condition controls the degree to which structurally heavy changes are systematically associated with unusually large residual effort.

A positive covariance means that changes in structurally exposed regions also carry additional unexplained difficulty. 
A reduction in this covariance indicates that structurally complex changes become less exceptional in their residual behavior. 
In practical terms, this can arise when teams learn how to handle complex regions more routinely, reducing the extra unpredictability previously associated with them.

Thus, A4 prevents the covariance term from offsetting the stabilizing effects of A2 and A3.

\subsection{The Dynamic Mechanism of Stabilization Without Simplification}

Taken together, assumptions A1--A4 imply a specific dynamic mechanism.

\begin{itemize}
\item The system continues to operate under nontrivial or increasing structural load.
\item Structural exposure becomes less heterogeneous or is engaged more regularly through change.
\item Residual process variability contracts.
\item The dependence between structural heaviness and residual difficulty weakens or at least does not intensify.
\end{itemize}

Under these conditions, the expected effort of change does not decrease, and the variance of effort does not increase; when at least one uncertainty-reducing condition is strict, the variance of effort decreases. 
The resulting motion in the burden--uncertainty plane is therefore directionally characterized rather than random:
\[
(B_t, U_t) \longrightarrow (\text{non-decreasing burden}, \text{decreasing uncertainty}).
\]

This is the formal meaning of stabilization without simplification.

\subsection{Graph-Theoretic Interpretation}

In graph-theoretic terms, the proposition implies that software evolution need not reduce the average dependency exposure of changed nodes in order to stabilize locally. 
Instead, stabilization can arise because

\begin{itemize}
\item the dispersion of dependency exposure contracts,
\item the residual variability of change effort decreases, and
\item structurally exposed changes cease to generate disproportionately unpredictable residual effort.
\end{itemize}

Thus, software evolution does not require simplification of the dependency graph. 
What matters is not whether the graph becomes smaller or less connected, but whether the distribution of change effort becomes more concentrated and better behaved.

\subsection{Internalization of Complexity}

A useful interpretation of the above mechanism is that complexity is progressively internalized rather than eliminated.

As systems evolve, developers, tools, and processes adapt to structural complexity. 
What was initially uncertain becomes routinized; what was initially heterogeneous becomes more regular; and what was initially difficult in an exceptional way becomes more normalizable.

The system remains complex, but that complexity is increasingly absorbed into stable operational patterns. 
This perspective explains how mature systems can remain viable even when their structural burden does not decline.

\subsection{On the Locality of the Structural Measure}

The present model uses first-order local dependency exposure, represented by the out-degree of the changed entity, as a baseline structural measure. 
This choice is intentional: it provides the simplest graph-based notion of local structural exposure that supports rigorous analysis. It should not be read as the only possible direction-sensitive measure of propagation cost.

More global notions of structural burden, such as transitive reachability, centrality, or multilayer dependency measures, are natural extensions of the framework. 
Such extensions may capture broader forms of propagation overhead and could yield stronger or more refined dynamic results.

Accordingly, the present model should be understood as a minimal graph-based theory rather than as a complete account of software structure.

\subsection{Implications for the Theory of Software Evolution}

The dynamic interpretation developed here suggests that software evolution should be modeled as a two-dimensional process rather than as a one-dimensional trend toward either increasing or decreasing complexity.

One-dimensional views cannot adequately capture the coexistence of persistent burden and decreasing uncertainty. 
By contrast, the burden--uncertainty framework shows that structural complexity and predictability may follow different trajectories and that their decoupling is theoretically meaningful.

This provides a more precise foundation for analyzing how software systems evolve toward stability.
\section{Relation to Empirical Observations}

The theoretical framework developed in this paper is motivated by recurring empirical patterns observed in studies of software evolution. 
In particular, analyses of large-scale open-source systems suggest that mature software projects often become more predictable over time even while remaining structurally complex~\cite{mockus2002,bird2009,kalliamvakou2014}.

The purpose of this section is not to provide a full empirical validation of the model, but to clarify how the quantities and assumptions introduced in the previous sections relate to observable behavior in real software systems.

\subsection{Empirical Interpretation of Burden}

In the present framework, burden is defined as the expected change effort:
\[
B_t = \mathbb{E}[e_t].
\]

Empirically, this quantity is connected to the average cost of change through observable proxies such as the number of changed files, lines modified, or other indicators of change size and coordination demand. This connection is a measurement bridge rather than a direct operationalization: the model primitive $d_t(X_t)$ is a degree-based structural exposure measure, whereas empirical proxies such as files changed are change-size or coordination-demand indicators.

The condition
\[
\Delta B_t \geq 0
\]
therefore corresponds to the empirical situation in which the average cost of change does not decrease over time. 
This is consistent with observations from large systems whose change processes remain nontrivial despite continued maturation.

\subsection{Empirical Interpretation of Uncertainty}

Uncertainty is defined as the variance of change effort:
\[
U_t = \mathrm{Var}(e_t).
\]

Empirically, this quantity is connected to the variability of change-related effort across events. Observable dispersion measures approximate moments of an underlying difficulty distribution rather than directly instantiating the graph-theoretic variance term. 
A decrease in $U_t$ indicates that changes become more predictable, even if their average cost remains substantial.

The condition
\[
\Delta U_t < 0
\]
thus corresponds to a reduction in the dispersion of change effort over time. 
This is the empirical signature of stabilization in the burden--uncertainty framework.

\subsection{Empirical Meaning of the Structural Assumptions}

The assumptions A1--A4 introduced in Section~3 can also be interpreted empirically.

\begin{itemize}
\item \textbf{A1} corresponds to the persistence of structural load: changed entities remain embedded in dependency neighborhoods whose average exposure does not decrease.
\item \textbf{A2} corresponds either to structural regularization of the changed regions or to regularization in the effective distribution of changed targets, so that structural exposure becomes less heterogeneous over time.
\item \textbf{A3} corresponds to process stabilization: residual variability decreases due to learning, tooling, standardization, and repeated workflows.
\item \textbf{A4} corresponds to covariance control: structurally heavy changes become less likely to carry disproportionately large residual difficulty.
\end{itemize}

These assumptions are not arbitrary mathematical devices. 
Rather, they describe patterns that are plausible in mature software projects and consistent with empirical observations in large-scale development.

\subsection{Consistency with Observed Software Evolution}

Empirical studies of mining software repositories and change-based analysis suggest that change processes become increasingly structured over time~\cite{zimmermann2005,hassan2009}. 
At the same time, the systems under study do not necessarily exhibit a reduction in complexity or change burden.

This combination of persistent burden and decreasing variability is precisely the local regime characterized by Proposition~\ref{prop:local-sws}:
\[
\Delta B_t \geq 0,
\qquad
\Delta U_t < 0.
\]

Moreover, the proposition clarifies that decreasing uncertainty need not arise from a single source. 
It may result from reduced structural heterogeneity, reduced residual process noise, reduced covariance between structural exposure and residual difficulty, or any combination of these effects.

Accordingly, the theoretical model does not directly validate empirical proxies by itself. Rather, it provides a structural interpretation for why such observations can arise under identifiable conditions, while leaving the proxy-to-construct mapping to empirical measurement design.

\subsection{Scope of the Empirical Connection}

The empirical relation established here is intentionally modest. 
The goal is not to claim that every software system satisfies assumptions A1--A4, nor that all empirical cases must exhibit the same dynamic pattern.

Instead, the point is that the burden--uncertainty framework captures a meaningful class of empirical phenomena that one-dimensional views struggle to explain. 
Where empirical studies observe systems becoming more predictable without becoming simpler, the present model provides a principled interpretation.

A broader empirical evaluation across additional software ecosystems remains an important direction for future work.
\section{Discussion}

\subsection{Reframing Software Evolution}

The framework developed in this paper suggests that software evolution should not be understood solely as a process of increasing or decreasing complexity. 
Instead, it should be understood as a two-dimensional process in which structural burden and uncertainty evolve according to distinct dynamics.

This reframing is important because one-dimensional perspectives tend to treat complexity growth as inherently destabilizing. 
The present model shows that this is not necessarily the case: structural burden may persist or increase while uncertainty decreases.

\subsection{Theoretical Meaning of Stabilization}

Within the proposed framework, stabilization is defined not by a reduction in structural complexity, but by a reduction in the variability of change effort. 
This shifts the interpretation of software stability from simplification to predictability.

Under Proposition~\ref{prop:local-sws}, stabilization without simplification is not merely a heuristic intuition. 
It is a locally characterized regime represented by explicit assumptions about average structural load, structural regularization, process stabilization, and covariance control.

This result gives formal meaning to the idea that mature software systems may remain complex while becoming increasingly manageable.

\subsection{Relation to Existing Views of Complexity}

Traditional discussions of software evolution often emphasize complexity growth, technical debt, maintainability decline, or defect risk. 
These perspectives remain important, but they typically do not distinguish between the average cost of change and the variability of that cost.

The burden--uncertainty framework complements such views by introducing a second dimension. 
In this perspective, the problem is not only how much complexity a system accumulates, but also how predictably that complexity behaves under change.

This distinction helps explain why structural growth and practical stability are not always in contradiction.

\subsection{Relation to the Empirical Companion Study}

The theoretical model presented here is closely related to a separate empirical study of software evolution. 
That empirical study analyzes longitudinal data from multiple open-source software systems and reports cases in which uncertainty decreases over time while structural burden remains nontrivial.

The present paper provides the theoretical counterpart to that empirical finding. 
Its role is not to reanalyze the data, but to show that such a pattern is structurally coherent and representable under explicit assumptions.

Taken together, the two works provide a two-layered contribution:

\begin{itemize}
\item the empirical study identifies the pattern in real software systems, and
\item the present paper explains how such a pattern can arise as a consequence of burden--uncertainty decoupling.
\end{itemize}

Thus, the theoretical and empirical studies are complementary rather than redundant.

\subsection{Implications for Software Engineering}

The proposed framework suggests several implications for software engineering.

First, efforts to improve software stability should not be equated exclusively with efforts to reduce structural complexity. 
Systems may become more stable not because they become simpler, but because their change processes become more predictable.

Second, engineering practices such as testing, review discipline, workflow standardization, and accumulated developer knowledge may play a central role in stabilization by reducing uncertainty rather than by eliminating burden.

Third, the burden--uncertainty distinction provides a more precise language for discussing the evolution of large software systems. 
It allows stability and complexity to be analyzed as related but distinct phenomena.

Fourth, the introduction of a covariance term suggests that highly exposed structural regions may require special attention not only because they are costly, but because they may also amplify residual unpredictability. 
This points to a practical distinction between reducing average burden and reducing the exceptional difficulty associated with structurally sensitive changes.

\subsection{Limitations of the Framework}

The present model is intentionally minimal. 
It represents software structure through a dependency graph and change effort through a linear decomposition into structural and stochastic components.

This abstraction omits many aspects of real software systems, including semantic structure, organizational constraints, developer networks, and nonlinear change propagation. 
In particular, the use of first-order local dependency exposure as the structural measure is a simplifying choice rather than a complete representation of architectural burden.

Accordingly, the model should not be interpreted as a complete theory of software evolution. 
Its contribution is narrower but more precise: it isolates a structural representation of how stabilization without simplification can occur locally.

\subsection{Future Directions}

Several extensions follow naturally from this framework.

First, the graph model can be enriched to incorporate additional structural properties beyond local degree, such as transitive reachability, centrality, modularity, or multilayer dependency relations.

Second, the stochastic component can be generalized beyond variance-based descriptions to include heavier-tailed or non-Gaussian forms of uncertainty.

Third, the covariance structure between structural exposure and residual effort can be modeled more explicitly, rather than treated through monotonic control assumptions alone.

Fourth, the assumptions A1--A4 can be tested empirically across a broader range of software ecosystems in order to determine how widely the stabilization-without-simplification regime applies.

More broadly, the present work opens the possibility of developing a richer theory of software evolution in which structural burden and uncertainty are treated as separate but interacting dimensions.

\subsection{Conclusion}

This paper has proposed a graph-based, discrete-time probabilistic framework for software evolution in which structural burden and uncertainty are explicitly separated. 
Within this framework, we have characterized stabilization without simplification as a structurally admissible local regime.

The central implication is that software systems need not become structurally simpler in order to become more stable. 
What matters is whether the variability of change effort contracts even when structural load persists.

By making this distinction explicit, the burden--uncertainty framework provides a theoretical basis for understanding a recurring phenomenon in software evolution and offers a foundation for further theoretical and empirical work.

% =========================
\bibliographystyle{plain}
\bibliography{references}

\end{document}